\newcommand{\N}{\mathbb{N}}
\DeclareMathOperator{\dom}{dom}
\DeclareMathOperator{\range}{range}
\newcommand{\be}{\begin{equation}}
\newcommand{\ee}{\end{equation}}
\newcommand{\bea}{\begin{eqnarray}}
\newcommand{\eea}{\end{eqnarray}}
\newcommand{\bes}{\begin{equation*}}
\newcommand{\ees}{\end{equation*}}
\newcommand{\beas}{\begin{eqnarray*}}
\newcommand{\eeas}{\end{eqnarray*}}
\newtheorem{thm}{Theorem}
\newtheorem*{thm*}{Theorem}
\newtheorem{cor}[thm]{Corollary}
\newtheorem{fact}[thm]{Fact}
\newtheorem*{lem*}{Lemma}
\theoremstyle{definition}
\newtheorem{dfn}{Definition}
\begin{document}

\title{A composition theorem for decision tree complexity}

\author{Ashley Montanaro\footnote{{\tt am994@cam.ac.uk}}\\[11pt] {\small Centre for Quantum Information and Foundations, DAMTP, University of Cambridge, UK.}}

\maketitle

\begin{abstract}
We completely characterise the complexity in the decision tree model of computing composite relations of the form $h = g \circ (f^1,\dots,f^n)$, where each relation $f^i$ is boolean-valued. Immediate corollaries include a direct sum theorem for decision tree complexity and a tight characterisation of the decision tree complexity of iterated boolean functions.
\end{abstract}

% ------------------------------------------------------------------------------

\section{Introduction}

The decision tree model captures the complexity of computing functions $f:X^m \rightarrow Y$ in a setting where the quantity of interest is the number of queries to the input (see~\cite{buhrman02} for a good review of the model). We are allowed to query individual elements $x_i \in X$ at unit cost, and seek to compute $f(x_1,\dots,x_m)$ using the minimal number of queries.

Here we will be interested in the decision tree complexity of composite functions, i.e.\ functions of the form
\[ h(x) = g(f^1(x^1),\dots,f^n(x^n)), \]
where $x^i \in X^{m_i}$. One strategy to compute $h$ is first to replace $g$ with the function $\bar{g}$ given by substituting the values taken by any constant functions $f^i$ into $g$, and then to compute $\bar{g}$ using efficient algorithms for $f^1,\dots,f^n$ as black boxes. In other words, we use an efficient algorithm for computing $\bar{g}(y_1,\dots,y_n)$, and replace each query to a variable $y_i$ with the evaluation of $f^i(x^i)$ using an optimal algorithm. As the $x^i$ inputs are independent, it is natural to conjecture that this is in fact always the most efficient way to compute $g$. However, a moment's thought shows that this cannot be the case in general: indeed, consider the function $h(x) = g(f(x))$, where $f:\{0,1\}^n \rightarrow \{0,1\}^n$ is the identity function $f(x)=x$, and $g:\{0,1\}^n \rightarrow \{0,1\}$ is the function extracting the first bit of the input, $g(y) = y_1$. Then $f$ requires $n$ queries to be computed, whereas $h$ can be computed with 1 query. Another simple, but arguably less trivial, example of this phenomenon is illustrated in Figure \ref{fig:counter} below; the key point in this example is that the worst-case input to the functions $f^i$ may not produce an output which is a worst-case input to $g$.

We will show that, nevertheless, the simple algorithm above is optimal if the functions $f^i$ are {\em boolean-valued}, i.e.\ when $Y=\{0,1\}$. In fact, we prove such a result for the generalisation of the decision tree model to computing relations. A decision tree computes a relation $f \subseteq X^m \times Y$ if, on input $x \in X^m$, it outputs $y$ such that $(x,y) \in f$ (see Section~\ref{sec:defns} for a formal definition of the model). We call this result a composition theorem for decision tree complexity.

\begin{figure}
\begin{center}
\begin{subfigure}[b]{0.2\textwidth}
\begin{tikzpicture}[yscale=0.75,thick,every node/.style={inner sep=0mm,shape=circle}]
\node (1) at (0,3) {$x_1$};
\node (20) at (-1.25,2) {0};
\node (21) at (1.25,2) {$x_2$};
\node (310) at (0.5,1) {1};
\node (311) at (2,1) {2};

\draw (1) to (20); \draw (1) to (21); \draw (21) to (310); \draw (21) to (311);
\end{tikzpicture}
\caption{Optimal tree for $f$}
\end{subfigure}
\hspace{1cm}
\begin{subfigure}[b]{0.21\textwidth}
\begin{tikzpicture}[yscale=0.75,thick,every node/.style={inner sep=0mm,shape=circle}]
\node (1) at (0,3) {$y_1$};
\node (20) at (-1.25,2) {$y_2$};
\node (21) at (0,2) {1};
\node (22) at (1.25,2) {2};
\node (300) at (-2,1) {0};
\node (301) at (-1.25,1) {1};
\node (302) at (-0.5,1) {2};
\draw (1) to (20); \draw (1) to (21); \draw (1) to (22); \draw (20) to (300); \draw (20) to (301); \draw (20) to (302);
\end{tikzpicture}
\caption{Optimal tree for $g$}
\end{subfigure}
\hspace{1cm}
\begin{subfigure}[b]{0.25\textwidth}
\begin{tikzpicture}[yscale=0.75,thick,every node/.style={inner sep=0mm,shape=circle}]
\node (1) at (0,3) {$x_1$};
\node (20) at (-1.25,2) {$x_3$};
\node (21) at (1.25,2) {$x_2$};
\node (300) at (-2,1) {0};
\node (301) at (-0.5,1) {$x_4$};
\node (310) at (0.5,1) {1};
\node (311) at (2,1) {2};
\node (4010) at (-1,0) {1};
\node (4011) at (0,0) {2};
\draw (1) to (20); \draw (1) to (21); \draw (20) to (300); \draw (20) to (301); \draw (21) to (310); \draw (21) to (311); \draw (301) to (4010); \draw (301) to (4011);
\end{tikzpicture}
\caption{Optimal tree for $h$}
\end{subfigure}
\end{center}
\caption{Let $f:\{0,1\}^2 \rightarrow \{0,1,2\}$ and $g:\{0,1,2\}^2 \rightarrow \{0,1,2\}$ be defined by the decision trees above (where edges correspond to elements of $\{0,1\}$ or $\{0,1,2\}$ in ascending order from left to right). Set $h(x_1,x_2,x_3,x_4) = g(f(x_1,x_2),f(x_3,x_4))$. Then $f$ and $g$ require 2 queries each to be computed, but $h$ can be computed using only 3.}
\label{fig:counter}
\end{figure}
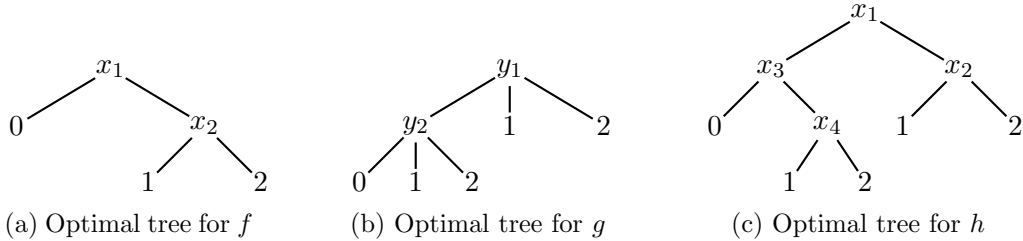

% ------------------------------------------------------------------------------

\subsection{Related work}

This work can be seen as fitting into the framework of ``direct sum'' results in query complexity, in which there has been significant recent interest (see the thesis~\cite{drucker12} for a good review). The basic aim of such work is to show that, given $n$ independent computational problems, the most efficient way to solve them all is simply to use an optimal algorithm for each problem separately. While this claim is intuitively obvious, it is not necessarily easy to prove, and in some cases is even false; see~\cite{drucker12} for some examples.

In particular, Jain, Klauck and Santha have shown~\cite{jain10} such a direct sum theorem for deterministic and randomised decision tree complexity. As discussed in~\cite{jain10}, previous work had dismissed this question as uninteresting, but the proof is not immediate. In the deterministic case, the result of~\cite{jain10} is proven by showing that, given a decision tree $T$ computing $n$ independent outputs, one can produce $n$ independent trees $T_i$, one for each output, and lower bounding the depth of $T$ in terms of the depth of the trees $T_i$. This approach does not seem suitable for proving a composition theorem, as given a tree computing $h$, it is not necessarily possible to produce individual trees for $f^1,\dots,f^n$. On the other hand, the composition theorem given here in fact implies a direct sum theorem for decision tree complexity (see Section \ref{sec:cor} for this and other corollaries).

Similar, but more complicated, composition theorems to the present work have been proven for quantities which are important in the study of {\em quantum} query complexity. H{\o }yer, Lee and \v{S}palek~\cite{hoyer05b} proved that the so-called non-negative weight adversary bound (which we will not define here; see~\cite{hoyer05b} for details) obeys a composition theorem. Interestingly, their proof is  based on generalising the query model to weighted queries, similarly to the approach taken here. A composition theorem was later proven for the general adversary bound (in two parts; a lower bound by H{\o }yer, Lee and \v{S}palek~\cite{hoyer07}, and an upper bound by Reichardt~\cite{reichardt09}), which was then used by Reichardt to infer a composition theorem for quantum query complexity~\cite{reichardt11}.

% ------------------------------------------------------------------------------

\subsection{Relations and decision trees}
\label{sec:defns}

We use essentially the standard model of decision trees (see e.g.~\cite{buhrman02} for definitions, and~\cite{jain10} for the extension to relations). Fix finite sets $X$ and $Y$, and integer $m$. A decision tree $T$ is a rooted $|X|$-ary tree whose internal vertices are labelled with indices between 1 and $n$, whose edges are labelled with elements of $X$, and whose leaves are labelled with elements of $Y$. $T$ computes a function $f_T:X^n \rightarrow Y$ as follows: starting with the root, it queries the input variable indexed by the label of the current vertex and, depending on the outcome, evaluates the corresponding subtree. When a leaf is reached, the tree outputs the label of that leaf. The depth of $T$ is the maximal length of a path from the root to a leaf (i.e.\ the worst-case number of queries used on any input).

We will consider the complexity of computing relations in the decision tree model. Let $f \subseteq X^n \times Y$. We write $\dom{f} = \{x:\exists y, (x,y) \in f\}$ and $\range{f} = \{y:\exists x, (x,y) \in f\}$, and also $f(x) = \{y:(x,y) \in f\}$, $f^{-1}(y) = \{x:(x,y) \in f\}$. $f$ is said to be a partial function if $|f(x)|=1$ for all $x \in \dom f$; in this case, we abuse notation and write $f(x)$ for the single element within the set $f(x)$. $f$ is a total function if in addition $\dom{f} = X^n$. We say that $f$ is {\em constant} if there exists $y \in Y$ such that $f^{-1}(y) = \dom f$, and {\em trivial} if, for all $y \in Y$, $f^{-1}(y) = \dom f$. Note that trivial relations are constant, and that the empty relation $f=\emptyset$ is trivial.

We say that a decision tree $T$ computes a relation $f$ if, for all $x \in \dom{f}$, on input $x$, $T$ outputs $y$ such that $(x,y) \in f$, i.e.\ $f_T(x) \in f(x)$ for all $x \in \dom{f}$. If this holds, we say that $f_T$ is consistent with $f$. The decision tree complexity $D(f)$ is the minimal depth over all decision trees computing $f$. This quantity is also known as the exact (classical) query complexity of $f$.

This can be generalised to define a weighted variant of decision tree complexity $D(f,[w_1,\dots,w_n])$, which is equal to the minimal worst-case number of queries required to compute $f$, given that querying the $i$'th element of the input costs $w_i$ queries. This can again be expressed as the minimal depth of a decision tree computing $f$ with weights $w_1,\dots,w_n$ on the inputs, where the depth of a decision tree $T$ with weights on the inputs is simply the maximal sum of the weights of the inputs specified by the labels of vertices of $T$, on any path from the root to a leaf. The usual decision tree complexity is recovered by setting $D(f) = D(f,[1,\dots,1])$.

For $f \subseteq X^n \times Y$, let $f_{i \leftarrow b} \subseteq X^n \times Y$ be the modification of $f$ we obtain by substituting $b$ as the $i$'th input to $f$. That is,
\[ f_{i \leftarrow b} = \{ (x,y) \in f : x_i = b \}. \]
We can use this definition to write down the following characterisation of weighted decision tree complexity (indeed, this could even be seen as its definition).

\begin{fact}
\label{fact:dt}
Let $f \subseteq X^n \times Y$, and $w_1,\dots,w_n \in \N$. Then $D(f,[w_1,\dots,w_n])=0$ if and only if $f$ is constant. If $f$ is not constant,
\[ D(f,[w_1,\dots,w_n]) = \min_{i \in [n]} \max_{b \in X} D(f_{i \leftarrow b},[w_1,\dots,w_n]) + w_i. \]
\end{fact}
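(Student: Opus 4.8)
The plan is to prove the two assertions separately: first the characterisation of the depth-zero case, and then the recursion, which I would obtain by matching an upper and a lower bound. Throughout I use the relational notion of ``$T$ computes $f$'', namely $f_T(x) \in f(x)$ for all $x \in \dom f$, and the identity $f_{i \leftarrow b}(x) = f(x)$ whenever $x_i = b$, which holds because restricting the $i$'th coordinate to $b$ discards no pair $(x,y) \in f$ whose first component already satisfies $x_i = b$.

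For the base case, I would note that a decision tree of weighted depth $0$ is exactly a single leaf labelled by some $y \in Y$ (this is where one uses that each query has positive cost, i.e.\ $w_i \geq 1$, so that any genuine query strictly increases the depth). Such a tree computes $f$ precisely when $y \in f(x)$ for every $x \in \dom f$, that is, when $\dom f \subseteq f^{-1}(y)$; since $f^{-1}(y) \subseteq \dom f$ always holds, this is equivalent to $f^{-1}(y) = \dom f$, which is the definition of $f$ being constant. Hence $D(f,[w_1,\dots,w_n]) = 0$ if and only if $f$ is constant.

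Now assume $f$ is not constant. For the upper bound, let $i$ attain the minimum on the right-hand side, and build a tree $T$ that queries index $i$ at the root and, on outcome $b \in X$, runs an optimal tree $T_b$ computing $f_{i \leftarrow b}$. The crucial check is that $T$ computes $f$: for $x \in \dom f$ with $x_i = b$ we have $x \in \dom f_{i \leftarrow b}$, so $f_{T_b}(x) \in f_{i \leftarrow b}(x) = f(x)$, as required. The weighted depth of $T$ equals $w_i + \max_{b} D(f_{i \leftarrow b},[w_1,\dots,w_n])$, which gives the ``$\leq$'' direction. For the lower bound, take an optimal tree $T$ for $f$; since $f$ is not constant, the base case forces $D(f,[w_1,\dots,w_n]) > 0$, so the root of $T$ is an internal vertex querying some index $i$. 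Each subtree $T_b$ reached on outcome $b$ must itself compute $f_{i \leftarrow b}$ (again using $f_{i \leftarrow b}(x) = f(x)$ for $x_i = b$), so its depth is at least $D(f_{i \leftarrow b},[w_1,\dots,w_n])$. Taking the worst branch yields $D(f,[w_1,\dots,w_n]) = w_i + \max_b (\text{depth of } T_b) \geq w_i + \max_b D(f_{i \leftarrow b},[w_1,\dots,w_n]) \geq \min_{j} \big( w_j + \max_b D(f_{j \leftarrow b},[w_1,\dots,w_n]) \big)$.

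The main obstacle is the bookkeeping forced by the relational setting and by degenerate values of $b$, rather than any deep difficulty. Specifically, I expect the care to lie in justifying $f_{i \leftarrow b}(x) = f(x)$ for inputs with $x_i = b$, and in handling those $b \in X$ that never occur in $\dom f$: for such $b$ the relation $f_{i \leftarrow b}$ has empty domain, hence is trivial and contributes $0$ to the maximum, so it is harmless whether or not $T$ supplies a corresponding subtree. Finally I would observe that $\dom f \neq \emptyset$ (which follows since $f$ is non-constant), guaranteeing that at least one value $b$ occurs and so that the maxima over $b$ are taken over a nonempty set.
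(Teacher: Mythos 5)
Your proposal is correct and follows essentially the same route as the paper's own proof: the depth-zero case via single-leaf trees, the upper bound by querying the minimizing index and recursing on $f_{i \leftarrow b}$, and the lower bound by peeling the root query off an optimal tree. Your treatment is if anything slightly more careful (explicitly using $w_i \geq 1$ for the base case, verifying $f_{i \leftarrow b}(x) = f(x)$ when $x_i = b$, and noting that the root of an optimal tree need not query the minimizing index, which the paper's wording glosses over), but the substance is identical.
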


\begin{proof}
If $f$ is constant, the tree can simply output $y$ such that $f^{-1}(y) = \dom{f}$ using no queries. On the other hand, if there exists a tree that computes $f$ and makes no queries, then there is only one fixed value $y$ that is output by the tree; so for all $x \in \dom{f}$, $(x,y) \in f$; so $f^{-1}(y) = \dom{f}$. If $f$ is not constant, then for any $i$, we can produce a tree computing $f$ with weights $w_1,\dots,w_n$ by first querying the $i$'th variable, obtaining outcome $b$, then using an optimal tree for $f_{i \leftarrow b}$ with weights $w_1,\dots,w_n$. This proves the ``$\le$'' part. For the ``$\ge$'' part, given a tree computing $f$ which makes its first query to the $i$'th variable, where $i$ achieves the minimum above, we can obtain a tree for computing $f_{i \leftarrow b}$, for any $b \in X$, by simulating this tree, replacing the first query with the constant $b$ and hence saving weight $w_i$.
\end{proof}

In the above fact, and the rest of the paper, we write $[n]$ for $\{1,\dots,n\}$.

% ------------------------------------------------------------------------------

\subsection{Composition of relations}

For $i \in [n]$, let $f^i \subseteq X^{m_i} \times Y$, and $g \subseteq Y^n \times Z$. Set $m = \sum_i m_i$. The composition $g \circ (f^1,\dots,f^n)$ is defined as
\[ \{(x,z) \in X^m \times Z: \exists y \in Y^n, \forall i\,(x^i,y_i) \in f^i \wedge (y,z) \in g \}, \]
where we write $x = (x^1,\dots,x^n)$, with $x^i \in X^{m_i}$. If $f$ and $g$ are total functions, this simply says that for all $x \in X^m$,
\[ h(x) = g( f^1(x^1), \dots, f^n(x^n) ). \]
In words, an algorithm to compute the relation $h$ on input $x$ has to output an arbitrary $z$ such that $z \in g(y)$, for some $y$ such that for all $i$, $y_i \in f^i(x^i)$.

Assume that $f^i$ is non-trivial for all $i$. If some of the relations $f^i$ are constant (i.e.\ $(f^i)^{-1}(b_i) = \dom{f^i}$, for some $b_i \in Y$), this expression for $h$ can be simplified by removing these inputs and modifying $g$ appropriately. Write $\bar{g}$ for the relation obtained by fixing constants to the inputs to $g$, where possible. In other words, if $S \subseteq [n]$ is the set of indices $i$ such that $f^i$ is not constant, then
\[ \bar{g} = \{ (y,z) \in g : y_i = b_i, i \notin S \}, \]
\[ h = \bar{g} \circ (f^1,\dots,f^n) = \{(x,z) \in X^m \times Z : \exists y \in Y^n, \forall i \in S, (x^i,y_i) \in f^i \wedge (y,z) \in \bar{g} \}.  \]

% ------------------------------------------------------------------------------

\section{A composition theorem}

We are finally ready to state and prove our main result.

\begin{thm}
\label{thm:main}
Fix $n \in \N$ and finite sets $X$, $Z$. Let $g \subseteq \{0,1\}^n \times Z$, and for $1 \le i \le n$, let $f^i \subseteq X^{m_i} \times \{0,1\}$ be non-trivial. Set $m = \sum_i m_i$. Define $h \subseteq X^m \times Z$ by
\[ h = g \circ (f^1,\dots,f^n). \]
Also let $\bar{g}$ be the relation given by substituting constant inputs into $g$. That is, for each $i$, if $(f^i)^{-1}(b_i) = \dom{f^i}$, then $g$ is replaced with $g_{i \leftarrow b_i}$ (as $f^i$ is non-trivial, there is exactly one such $b_i$). Then
\[ D(h) = D(\bar{g},[D(f^1),\dots,D(f^n)]). \]
\end{thm}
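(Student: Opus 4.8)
The plan is to prove the two inequalities separately.

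The upper bound $D(h) \le D(\bar g, [D(f^1),\dots,D(f^n)])$ should be the easier direction. I would describe the natural algorithm: take an optimal (weighted) decision tree $T$ for $\bar g$ in which querying input $i$ costs $D(f^i)$. To compute $h$ on input $x = (x^1,\dots,x^n)$, simulate $T$, and whenever $T$ queries its $i$-th input $y_i$, instead run an optimal (depth-$D(f^i)$) decision tree for $f^i$ on $x^i$ and feed the resulting bit back into $T$. For constant $f^i$ (with $i \notin S$) there is nothing to query, which is exactly why we pass to $\bar g$. The total number of $X$-queries on any input is at most the weighted depth of $T$, giving the bound. One should check that the output is consistent: the bits produced are a valid $y$ with $y_i \in f^i(x^i)$, and $T$ outputs some $z \in \bar g(y) \subseteq g(y)$, so $(x,z) \in h$.

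The lower bound $D(h) \ge D(\bar g, [D(f^1),\dots,D(f^n)])$ is where the real content lies, and I expect it to be the main obstacle. I would argue by induction, using the recursive characterisation in Fact~\ref{fact:dt} applied to the weighted complexity on the right-hand side. The idea is that an adversary strategy for $\bar g$ can be ``lifted'' through the hard inputs of the $f^i$. Concretely, suppose $\bar g$ is non-constant and $i$ is the index minimising $\max_b D(\bar g_{i\leftarrow b},[w_1,\dots,w_n]) + w_i$ with $w_j = D(f^j)$. Since each $f^i$ is boolean-valued and non-trivial, for each bit value $b$ there is a hard input distribution / adversary for $f^i$ forcing $D(f^i)$ queries before the $i$-th bit is determined; the key structural fact special to $Y=\{0,1\}$ is that the adversary only needs to keep the $i$-th output bit ambiguous, and a boolean relation has exactly the two possible output values, so committing to $b$ costs the full $D(f^i)$. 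I would set up an adversary for $h$ that runs independent adversaries on each block $x^i$, maintaining that the induced sub-relation on block $i$ still requires $D(f^i)$ (or $D(f^i_{\cdot})$ after partial queries) while the overall residual composite relation is $\bar g_{i \leftarrow b} \circ (\dots)$.

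The crux is showing that queries the tree makes inside block $i$ genuinely cost $D(f^i)$ and cannot be ``amortised'' across blocks or shortcut because the worst-case $f^i$-input fails to be worst-case for $g$ (the phenomenon in Figure~\ref{fig:counter}). The boolean-valued hypothesis is precisely what rules this out: because $f^i$ outputs only one bit, once the adversary has not yet revealed that bit, \emph{both} continuations $\bar g_{i\leftarrow 0}$ and $\bar g_{i \leftarrow 1}$ remain live, so the tree cannot exploit correlations between the value of $f^i$ and which input to $g$ is worst-case. I would formalise this by a careful induction on $D(\bar g, [w_1,\dots,w_n])$: if $\bar g$ is constant the bound is trivial, and otherwise I combine a lower bound on the queries forced within the first non-constant block with the inductive hypothesis applied to $\bar g_{i \leftarrow b}$. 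The delicate point to get right is bookkeeping the partial progress made inside a single block $f^i$ (a tree for $h$ need not finish evaluating one $f^i$ before touching another), so the induction likely needs to be on the finer weighted complexity of the partially-queried $f^i$'s simultaneously with that of $\bar g$, rather than treating the blocks as atomic.
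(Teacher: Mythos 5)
Your upper bound is exactly the paper's argument and is fine. For the lower bound, your central structural observation is also correct and is indeed the heart of the matter: a non-constant boolean-valued relation keeps \emph{both} output values forced on some consistent inputs, so an adversary can still steer $f^i$ to either value, which is what fails in Figure~\ref{fig:counter}. But your sketch stops precisely at the crux you yourself flag, and the missing idea is this: run the induction on $D(h)$, not on $D(\bar g,[w_1,\dots,w_n])$, and observe that the class of composed relations is \emph{closed under single-variable restrictions}. By Fact~\ref{fact:dt}, the first query of an optimal tree for $h$, say to variable $j$ of block $i$, turns $h$ into $g \circ (f^1,\dots,f^i_{j\leftarrow b},\dots,f^n)$ --- again a composite of the same form with non-trivial factors (one must check the worst-case $b$ can be taken to preserve non-triviality, a point your adversary also silently needs: it must never give an answer making some $f^i_{j\leftarrow b}$ trivial or empty). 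The inductive hypothesis applied to this restricted composite then carries all ``partial progress'' inside blocks automatically, so no simultaneous bookkeeping of partially-queried $f^i$'s is needed. Two concrete lemmas close the recursion, neither of which appears in your outline: $\max_{b} D(f^i_{j\leftarrow b}) \ge D(f^i)-1$, and $D(\bar g,[\dots,w_i,\dots]) \le D(\bar g,[\dots,w_i-1,\dots])+1$, the latter because an optimal weighted tree queries each variable at most once on any root-to-leaf path; and a separate case is required when the maximising restriction $f^i_{j\leftarrow b}$ is constant, where one argues $D(f^i)=1$ and directly builds a weighted tree for $\bar g$ from the tree for $h$.

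The more serious gap is the base case, which you dismiss in the wrong direction. Any version of this induction needs the implication ``$h$ constant $\Rightarrow$ $\bar g$ constant'' (equivalently, if the potential is still positive the tree cannot halt); saying ``if $\bar g$ is constant the bound is trivial'' gives the converse, which is not what is needed. This step is genuinely nontrivial: one must show that every $y \in \dom \bar g$ is realised by an input $x$ for which $y$ is the \emph{unique} string consistent with $(f^1,\dots,f^n)$, which requires, for each non-constant $f^i$, inputs $a^i,b^i$ forcing outputs $0$ and $1$ respectively, and --- the subtle point --- for each constant-but-non-trivial $f^i$, an input $d^i$ on which the constant value $c_i$ is the only valid output. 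This is exactly where the non-triviality hypothesis enters (for trivial $f^i$ the theorem is false as stated, since the $i$'th input to $g$ becomes a free existential choice), and without this lemma your adversary has no guarantee against early termination. A minor point: your appeal to ``hard input distributions'' is a red herring --- this is worst-case deterministic complexity, and a plain adaptive adversary (as supplied by Fact~\ref{fact:dt}) suffices.
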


\begin{proof}
To see that $D(h) \le D(\bar{g},[D(f^1),\dots,D(f^n)])$, observe that the algorithm discussed in the introduction achieves this complexity. Formally, we compute $h$ as follows: first form a new relation $\bar{g}$ by replacing constant relations $f^i$ with corresponding constants as inputs to $g$; then use an optimal tree for computing $\bar{g}$ with weights $D(f^1),\dots,D(f^n)$,  replacing each query to the $i$'th variable with the use of an optimal tree for $f^i$. The more interesting part is the corresponding lower bound, the proof of which proceeds by induction on $D(h)$.

For the base case, take $D(h)=0$. Then $h$ is constant. By definition, this implies that there exists $z$ such that the following holds: For all $x = (x^1,\dots,x^n)$ such that there exist $y' \in \{0,1\}^n$ and $z' \in Z$ with $(x^i,y'_i) \in f^i$ for all $i$ and $(y',z') \in g$, there exists $y \in \{0,1\}^n$ such that $(x^i,y_i) \in f^i$ for all $i$ and $(y,z) \in g$. Let $S = \{i:f^i \text{ is not constant} \}$. For all $i \in S$, let $a^i,b^i \in X^{m_i}$ be picked such that $(a^i,0) \in f$, $(a^i,1) \notin f$, $(b^i,0) \notin f$ and $(b^i,1) \in f$. As $f^i$ is not constant, such elements $a^i$, $b^i$ exist. Further, for all $i \notin S$, let $c_i$ satisfy $(f^i)^{-1}(c_i) = \dom f^i$. As $f^i$ is not trivial, $c_i$ is unique and there exists $d^i \in X^{m_i}$ such that $(d^i,c_i) \in f^i$, but $(d^i,c'_i) \notin f^i$ for $c'_i \neq c_i$.

For $s:S \rightarrow \{0,1\}$, let the string $y(s) \in \{0,1\}^n$ be defined as follows: $y(s)_i = s(i)$ for $i \in S$, and $y(s)_i = c_i$ for $i \notin S$. For each distinct input $x \in \{0,1\}^m$ whose $i$'th block is equal to $a^i$ or $b^i$ (for $i \in S$) or $d^i$ (for $i \notin S$), a different string $y(s)$ is obtained, and if $(x,y) \in (f^1,\dots,f^n)$, $y=y(s)$. In particular, by choosing $s$ appropriately one can produce any $y \in \dom \bar{g}$. As there is exactly one $y$ such that $(x,y) \in (f^1,\dots,f^n)$, and $h$ is constant, $(y,z) \in g$. Thus, for all $y \in \dom \bar{g}$, $(y,z) \in \bar{g}$. Hence $\bar{g}$ is constant and $D(\bar{g},[D(f^1),\dots,D(f^n)]) = 0$ as required.

For the inductive step, assume that $D(h) = D(\bar{g},[D(f^1),\dots,D(f^n)])$ for all relations $h = g \circ (f^1,\dots,f^n)$ such that $f^i$ is non-trivial for all $i$, and such that $D(h) \le k$, for some $k$. Now let $h = g \circ (f^1,\dots,f^n)$ satisfy $D(h)=k+1$. As $h$ is not constant, by Fact \ref{fact:dt}
\beas
D(h) &=& \min_{i \in [m]} \max_{b \in X} D(h_{i \leftarrow b}) + 1\\
&=& \min_{i \in [n]} \min_{j \in [m_i]} \max_{b \in X} D(g \circ (f^1,\dots,f^i_{j \leftarrow b},\dots,f^n) ) + 1.
\eeas
As $f^i$ is non-trivial for all $i$, for any pair $i,j$ there exists $b'$ such that $f^i_{j \leftarrow b'}$ is non-trivial. Thus, for $b$ achieving the above maximum, $f^i_{j \leftarrow b}$ is non-trivial. Let $i,j$ be the indices achieving the above minimum. Then, by the inductive hypothesis,
\[ \max_{b \in X} D(g \circ (f^1,\dots,f^i_{j \leftarrow b},\dots,f^n) ) = \max_{b \in X} D(\bar{\bar{g}},[D(f^1),\dots,D(f^i_{j \leftarrow b}),\dots,D(f^n)] ), \]
where $\bar{\bar{g}}$ is the relation obtained by substituting $b_k$ as the $k$'th input to $g$, for all $k \neq i$ such that $(f^k)^{-1}(b_k) = \dom f^k$, and also substituting $b_i$ as the $i$'th input to $g$ if $(f^i_{j \leftarrow b})^{-1}(b_i) = \dom f^i_{j \leftarrow b}$. We thus have
\beas
D(h) &=& \max_{b \in X} D(\bar{\bar{g}},[D(f^1),\dots,D(f^i_{j \leftarrow b}),\dots,D(f^n)] ) + 1\\
&=& D(\bar{\bar{g}},[D(f^1),\dots,\max_{b \in X} D(f^i_{j \leftarrow b}),\dots,D(f^n)] ) + 1,
\eeas
where the second equality holds because decreasing any of the weights cannot increase the decision tree complexity. First assume that, for $b$ achieving the maximum, $f^i_{j \leftarrow b}$ is not constant. Then $\bar{\bar{g}} = \bar{g}$ and
\beas
D(h) &=& D(\bar{g},[D(f^1),\dots,\max_{b \in X} D(f^i_{j \leftarrow b}),\dots,D(f^n)] ) + 1\\
&\ge& D( \bar{g},[D(f^1),\dots,D(f^i) - 1,\dots,D(f^n)] ) + 1\\
&\ge& D( \bar{g},[D(f^1),\dots,D(f^n)] )
\eeas
as required. The first inequality follows from $D(f^i) \le \max_{b\in X} D(f^i_{j \leftarrow b}) + 1$, while the second  holds because, to find a decision tree $T$ which computes $\bar{g}$ with weight $D(f^i)$ on the $i$'th variable, we can just use an optimal decision tree $T'$ for $\bar{g}$ with weight $D(f^i) - 1$ on the $i$'th variable. As $T'$ is optimal, we only encounter the $i$'th variable at most once on any path from the root to a leaf, so the depth of $T$ is upper bounded by the depth of $T'$, plus 1.

On the other hand, now assume that, for $b$ achieving the maximum, $f^i_{j \leftarrow b}$ is constant. Then it must hold that $D(f^i)=1$ (if $f^i$ were already constant, querying any of the variables on which it depends would not achieve the overall minimum; if $f^i$ is not constant, and there exists $b' \in X$ such that $f^i_{j \leftarrow b'}$ is not constant, such a $b'$ would achieve the maximum over $b$; hence $D(f^i)=1$). So we can produce a decision tree for $\bar{g}$ with weights $D(f^1),\dots,D(f^n)$ by computing $f^i$ using one query to the $j$'th variable and then using an optimal decision tree for $\bar{\bar{g}}$ with weights $D(f^1),\dots,D(f^{i-1}),0,D(f^{i+1}),\dots,D(f^n)$. Thus
\[ D( \bar{g},[D(f^1),\dots,D(f^n)] ) \le \max_{b \in X} D(\bar{\bar{g}},[D(f^1),\dots,D(f^i_{j \leftarrow b}),\dots,D(f^n)] ) + 1 = D(h) \]
as required, completing the proof.
\end{proof}

% ------------------------------------------------------------------------------

\subsection{Corollaries}
\label{sec:cor}

We finish by noting a few simple corollaries of Theorem \ref{thm:main} that can be useful in amplifying separations between decision tree complexity and other complexity measures. We begin by considering the case where all the relations $f^i$ are the same.

\begin{cor}
Fix $f \subseteq X^k \times \{0,1\}$ and $g \subseteq \{0,1\}^n \times Z$. Define $h \subseteq X^{nk} \times Z$ as $h = g \circ (f,f,\dots,f)$. Then $D(h) = D(f) D(g)$.
\end{cor}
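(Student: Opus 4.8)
The plan is to invoke Theorem \ref{thm:main} and then reduce the weighted complexity it produces to an unweighted one. We may assume $f$ is non-trivial: if $f$ is trivial then $D(f)=0$, every copy outputs an arbitrary bit, and $h$ is itself trivial, so both sides of the claimed identity vanish. Granting non-triviality, all $n$ copies $f^i=f$ are non-trivial, and the theorem gives $D(h) = D(\bar g,[D(f),\dots,D(f)])$, where every weight is the single number $w := D(f)$. Thus the corollary is exactly the assertion that, with $n$ equal weights $w$, the weighted complexity of $g$ collapses to $w\,D(g)$.

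First I would dispose of the degenerate case in which $f$ is constant. Then $D(f)=0$, and substituting the common constant value of $f$ into every input of $g$ leaves $\bar g$ with at most one point in its domain; hence $\bar g$ is constant and $D(\bar g,[0,\dots,0])=0 = D(f)\,D(g)$ by Fact \ref{fact:dt}. So from now on I assume $f$ is non-constant, in which case no $f^i$ is constant and therefore $\bar g = g$.

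It then remains to prove the uniform-weight identity $D(g,[w,\dots,w]) = w\,D(g)$. The key observation is that under uniform weights the weighted depth of any fixed decision tree $T$ is exactly $w$ times its unweighted depth: along any root-to-leaf path that makes $\ell$ queries the accumulated weight is precisely $w\ell$, so minimising weighted depth is the same as minimising the number of queries. Taking the minimum over all trees computing $g$ yields $\min_T\bigl(w\cdot\mathrm{depth}(T)\bigr) = w\cdot\min_T \mathrm{depth}(T) = w\,D(g)$. (Equivalently, one can run a short induction on $D(g)$ using Fact \ref{fact:dt}, since factoring out the common $w$ commutes with the min--max recursion $\min_i\max_b D(g_{i\leftarrow b},\cdot)+w$.) Chaining the three steps gives $D(h)=w\,D(g)=D(f)\,D(g)$.

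I do not expect a genuine obstacle here: the entire substance is carried by Theorem \ref{thm:main}, and the corollary is merely its specialisation to equal weights, where weighted and unweighted complexity differ only by the scalar $D(f)$. The one point requiring care is the bookkeeping around constant and trivial $f$, which must be handled separately so that the identity continues to hold (as $0=0$) in the boundary case $D(f)=0$, where the main theorem's non-triviality hypothesis and the clean scaling argument both become vacuous.
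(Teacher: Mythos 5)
Your proposal is correct and takes essentially the same route as the paper: apply Theorem~\ref{thm:main} with $f^1=\dots=f^n=f$, handle the degenerate case $D(f)=0$ separately, and collapse the resulting uniform-weight complexity via $D(g,[w,\dots,w])=w\,D(g)$ (a scaling identity the paper leaves implicit but which your tree-depth argument justifies cleanly). One cosmetic slip: when $f$ is trivial, $h$ need not be trivial (e.g.\ if $\range g \neq Z$, some $z$ has $h^{-1}(z)=\emptyset$), but $h$ is still \emph{constant}, which is all your argument needs to conclude $D(h)=0$.
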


\begin{proof}
If $f$ is constant, then $D(h) = 0$. Otherwise, take $f^1 = \dots = f^n = f$ in Theorem \ref{thm:main}. Then $D(h) = D(g,[D(f),\dots,D(f)]) = D(f) D(g)$.
\end{proof}

Second, we observe that the composition theorem can be used to characterise the decision tree complexity of an iterated boolean function, via the following corollary.

\begin{cor}
\label{cor:iter}
For any $f:\{0,1\}^n \rightarrow \{0,1\}$, let the $k$-fold iteration of $f$, $f^{(k)}:\{0,1\}^{n^k} \rightarrow \{0,1\}$, be defined as follows. Set $f^{(1)}=f$, and for $k \ge 2$, split the input $x \in \{0,1\}^{n^k}$ into blocks $x^1,\dots,x^n$ of $n^{k-1}$ bits each, and set $f^{(k)}(x) = f( f^{(k-1)}(x^1),\dots,f^{(k-1)}(x^n) )$. Then $D(f^{(k)}) = D(f)^k$.
\end{cor}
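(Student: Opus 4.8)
The plan is to prove Corollary~\ref{cor:iter} by induction on $k$, using Theorem~\ref{thm:main} as the engine for the inductive step. The key observation is that $f^{(k)}$ is literally a composition of the form $g \circ (f^1,\dots,f^n)$ with $g = f$ and each $f^i = f^{(k-1)}$, so the composition theorem should apply almost verbatim once I verify its hypotheses are met.

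\emph{First} I would set up the induction. The base case $k=1$ is immediate since $f^{(1)}=f$ gives $D(f^{(1)}) = D(f) = D(f)^1$. For the inductive step, I assume $D(f^{(k-1)}) = D(f)^{k-1}$ and write $f^{(k)} = f \circ (f^{(k-1)},\dots,f^{(k-1)})$, a composition of $n$ copies of $f^{(k-1)}$ under $f$. \emph{Then} I would apply Theorem~\ref{thm:main}, which yields $D(f^{(k)}) = D(\bar{f},[D(f^{(k-1)}),\dots,D(f^{(k-1)})])$, where $\bar f$ is $f$ with any constant inner functions substituted in. Since all the weights are equal to $D(f^{(k-1)}) = D(f)^{k-1}$, and computing a function with uniform weight $w$ on every input simply scales its unweighted complexity by $w$, I would conclude $D(f^{(k)}) = D(f)^{k-1} \cdot D(f) = D(f)^k$.

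The main obstacle is checking the hypotheses of Theorem~\ref{thm:main}, which requires each inner function $f^{(k-1)}$ to be \emph{non-trivial}, and dealing with the degenerate case where $f$ (hence each $f^{(k-1)}$) is constant. If $f$ is constant, then $D(f)=0$ and one checks directly that $f^{(k)}$ is constant for all $k$, so $D(f^{(k)}) = 0 = D(f)^k$; this case should be dispatched separately up front. If $f$ is non-constant, I must argue that $f^{(k-1)}$ is non-constant (and a fortiori non-trivial) for every $k$, which I expect follows by a short induction: a non-constant $f$ attains both output values, and composing it with non-constant functions preserves this, so $f^{(k-1)}$ takes both values $0$ and $1$. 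With $f$ non-constant, no inner function is constant, so the substitution step in Theorem~\ref{thm:main} changes nothing and $\bar f = f$.

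\emph{Finally}, I would make precise the step $D(f,[w,\dots,w]) = w \cdot D(f)$ for uniform weights $w$. This follows cleanly from Fact~\ref{fact:dt}: an optimal unweighted decision tree for $f$ of depth $D(f)$ incurs, under uniform weight $w$, a cost of exactly $w$ per query along any root-to-leaf path, so its weighted depth is $w \cdot D(f)$; conversely any weighted tree of weighted depth $d$ has unweighted depth $d/w$, giving the matching lower bound. I expect this scaling identity and the non-triviality verification to be the only places requiring any care; the corollary itself is then a direct consequence of the main theorem applied repeatedly.
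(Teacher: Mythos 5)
Your proposal is correct and follows essentially the same route as the paper: dispatch the constant case up front, then induct on $k$ via Theorem~\ref{thm:main} with $g=f$ and all inner relations equal to $f^{(k-1)}$, using the uniform-weight scaling $D(f,[w,\dots,w]) = w\,D(f)$. The paper leaves the non-triviality of $f^{(k-1)}$ and the scaling identity implicit, whereas you verify both explicitly (and correctly); this is merely a difference in level of detail, not of method.
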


\begin{proof}
If $f$ is constant, then $D(f^{(k)}) = 0 = D(f)^k$. Otherwise, by Theorem \ref{thm:main}, for $k \ge 2$, $D(f^{(k)}) = D(f, [D(f^{(k-1)}),\dots,D(f^{(k-1)})]) = D(f) D(f^{(k-1)})$. The claim follows by induction.
\end{proof}

Nisan and Szegedy used an iterated version of the ``not all equal'' function on 3 bits to prove an asymptotic separation between decision tree complexity and polynomial degree~\cite{nisan94}, based on lower bounding decision tree complexity by sensitivity. Corollary~\ref{cor:iter} gives a direct proof of this result.

Finally, we obtain the following special case of a direct sum result of Jain, Klauck and Santha~\cite{jain10}.

\begin{cor}
For any non-trivial $f^1,\dots,f^n \subseteq X^k \times \{0,1\}$, let $h \subseteq X^{kn} \times \{0,1\}^n$ be defined by $h = (f^1,f^2,\dots,f^n)$. Then $D(h) = \sum_{i \in [n]} D(f^i)$.
\end{cor}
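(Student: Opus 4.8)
The plan is to derive this direct sum result as a special case of the composition theorem by exhibiting $h = (f^1,\dots,f^n)$ as a composition $g \circ (f^1,\dots,f^n)$ for a suitably chosen outer relation $g$. The natural choice is to take $g \subseteq \{0,1\}^n \times \{0,1\}^n$ to be the identity relation, i.e.\ $g = \{(y,y) : y \in \{0,1\}^n\}$, which is a total function. With this choice, $h = g \circ (f^1,\dots,f^n)$ indeed outputs the tuple of the individual values $f^i(x^i)$, matching the stated definition of $h$.

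Having set this up, I would apply Theorem~\ref{thm:main} to obtain $D(h) = D(\bar{g},[D(f^1),\dots,D(f^n)])$, where $\bar{g}$ is $g$ with any constant inner relations substituted in. Since each $f^i$ is non-trivial by hypothesis, I should first check what happens to constant $f^i$: if $f^i$ is constant (with value $b_i$), then the $i$'th output coordinate of $h$ is forced, and correspondingly $D(f^i)=0$, so that coordinate contributes nothing to the sum and is correctly absorbed into $\bar{g}$. For the non-constant coordinates, the key computation is to evaluate $D(\bar{g},[\dots])$ for the (reduced) identity relation. The central claim is that computing the identity relation on $n$ boolean inputs with weights $w_1,\dots,w_n$ requires total weight exactly $\sum_i w_i$: to output the correct $n$-bit string one must learn every non-constant coordinate, so every variable of positive weight must be queried on the worst-case path, giving the lower bound, while querying them all in sequence gives the matching upper bound. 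This is essentially Fact~\ref{fact:dt} applied repeatedly, since the identity is non-constant in every coordinate until all have been fixed.

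The only real subtlety — and the step I would be most careful about — is verifying that $D(\text{identity},[w_1,\dots,w_n]) = \sum_i w_i$ rigorously rather than by appeal to intuition, since the whole paper is about cases where the naive intuition can fail. Here the saving grace is that the outer function is the identity: each output coordinate depends on exactly one input coordinate, so there is no opportunity for the worst-case-input mismatch that drives the counterexample in Figure~\ref{fig:counter}. Formally, I would induct on the number of non-constant coordinates using Fact~\ref{fact:dt}, observing that fixing any one input variable $y_i = b$ reduces the identity relation to the identity on the remaining coordinates (now with $y_i$ constant), so $D(\text{id},[w_1,\dots,w_n]) = \min_i \big(\max_b D(\text{id}_{i\leftarrow b},[\dots]) + w_i\big) = \min_i\big(\sum_{j\neq i} w_j + w_i\big) = \sum_j w_j$. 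Substituting $w_i = D(f^i)$ then yields $D(h) = \sum_{i\in[n]} D(f^i)$, as claimed.
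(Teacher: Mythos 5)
Your proposal is correct and follows essentially the same route as the paper: take $g$ to be the identity relation, apply Theorem~\ref{thm:main}, and evaluate $D(\bar{g},[D(f^1),\dots,D(f^n)])$ for the reduced identity, with constant $f^i$ contributing $D(f^i)=0$. The only difference is presentational --- the paper simply asserts that this weighted complexity equals $\sum_{i \in S} D(f^i)$, whereas you justify it by induction via Fact~\ref{fact:dt}, which is a reasonable detail to make explicit.
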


\begin{proof}
Take $g(y_1,\dots,y_n) = (y_1,\dots,y_n)$ in Theorem \ref{thm:main}. Then $D(h) = D(\bar{g},[D(f^1),\dots,D(f^n)])$. Let $S = \{i:f^i \text{ is not constant} \}$. Then $D(\bar{g},[D(f^1),\dots,D(f^n)]) = \sum_{i \in S} D(f^i) = \sum_{i \in [n]} D(f^i)$.
\end{proof}

The proof technique of~\cite{jain10} allows the range of each relation $f^i$ to be arbitrary, rather than being restricted to $\{0,1\}$ as here.

% ------------------------------------------------------------------------------

\subsection*{Acknowledgements}

I would like to thank Graeme Mitchison and Toby Cubitt for helpful discussions and comments on this work, and an anonymous referee for inspiring it.

% ------------------------------------------------------------------------------

\bibliographystyle{plain}
\bibliography{../thesis}

\begin{thebibliography}{1}

\bibitem{buhrman02}
H.~Buhrman and R.~de~Wolf.
\newblock Complexity measures and decision tree complexity: a survey.
\newblock {\em Theoretical Computer Science}, 288:21--43, 2002.

\bibitem{drucker12}
A.~Drucker.
\newblock {\em The Complexity of Joint Computation}.
\newblock PhD thesis, MIT, 2012.

\bibitem{hoyer05b}
P.~H{\o }yer, T.~Lee, and R.~\v{S}palek.
\newblock Tight adversary bounds for composite functions, 2005.
\newblock \url{quant-ph/0509067}.

\bibitem{hoyer07}
P.~H{\o }yer, T.~Lee, and R.~\v{S}palek.
\newblock Negative weights make adversaries stronger.
\newblock In {\em Proc. 39\textsuperscript{th} Annual ACM Symp. Theory of
  Computing}, pages 526--535, 2007.
\newblock \url{quant-ph/0611054}.

\bibitem{jain10}
R.~Jain, H.~Klauck, and M.~Santha.
\newblock Optimal direct sum results for deterministic and randomized decision
  tree complexity.
\newblock {\em Inf. Proc. Lett.}, 110(20):893--897, 2010.
\newblock \url{arXiv:1004.0105}.

\bibitem{nisan94}
N.~Nisan and M.~Szegedy.
\newblock On the degree of {B}oolean functions as real polynomials.
\newblock {\em Computational Complexity}, 4(4):301--313, 1994.

\bibitem{reichardt09}
B.~Reichardt.
\newblock Span programs and quantum query complexity: The general adversary
  bound is nearly tight for every boolean function.
\newblock In {\em Proc. 50\textsuperscript{th} Annual Symp. Foundations of
  Computer Science}, pages 544--551, 2009.
\newblock \url{arXiv:0904.2759}.

\bibitem{reichardt11}
B.~Reichardt.
\newblock Reflections for quantum query algorithms.
\newblock In {\em Proc. 22\textsuperscript{nd} ACM-SIAM Symp. Discrete
  Algorithms}, pages 560--569, 2011.
\newblock \url{arXiv:1005.1601}.

\end{thebibliography}

% ------------------------------------------------------------------------------
\end{document}